\newtheorem{theorem}{Theorem}
\newtheorem{corollary}[theorem]{Corollary}
\newtheorem{definition}[theorem]{Definition}
\newcommand{\R}{\mathbb{R}}
\newcommand{\supp}{\operatorname{supp}}
\title[SQG temperature variance cascade]{A note on the surface quasi-geostrophic temperature variance cascade}
\author{Z. Bradshaw}\author{Z. Gruji\'c}
\begin{document}
\begin{abstract}
In this note we examine the dynamical role played by inertial forces on the surface temperature (or buoyancy) variance in strongly rotating, stratified flows with uniform potential vorticity fields and fractional dissipation.  In particular, using a dynamic, multi-scale averaging process, we identify a sufficient condition for the existence of a direct temperature variance cascade across an inertial range. While the result is consistent with the physical and numerical theories of SQG turbulence, the condition triggering the cascade is more exotic, a fact reflecting the non-locality introduced by fractional dissipation.  A comment regarding the scale-locality of the temperature variance flux is also included.
\end{abstract}
\maketitle
\section{Introduction}

The present note studies the dynamical influence of inertial forces on the temperature variance of solutions to the surface quasi-geostrophic (SQG) model and its generalizations. In particular, we are interested in the inertial effect of a strongly rotating, stratified fluid, $u$, on a surface temperature (or buoyancy), $\theta$, where the fluid's streamfunction, $\Psi$,  is independent of the vertical coordinate.  The evolution of $\theta$ is governed by the SQG equations, which have the form,
\begin{align}
\begin{cases}\label{criticalSQG}\big(\partial_t +u\cdot \nabla +\kappa\Lambda^\alpha \big)\theta = 0 		& \mbox{~in~}\R^2\times (0,\infty)
\\u=(-\partial_{2}\Psi, \partial_1 \Psi ),~\nabla\cdot u=0 	& \mbox{~in~}\R^2\times (0,\infty)
\\ \theta(\cdot, 0)=\theta_0 &\mbox{~in~}\R^2,
\end{cases}
\end{align} where $0<\alpha\leq 2$ and $\kappa\geq 0$ are parameters, the streamfunction and $\theta$ are related by $ \Lambda \Psi = -\theta$, and $\Lambda^\alpha$ is defined as a pseudo-differential operator by, \[ ( \Lambda^\alpha f)^\wedge(\xi)=|\xi|^\alpha f^\wedge (\xi),\]for $\xi\in \R^2$.  Note that $\alpha=1$ corresponds to the \emph{critical} case, while $\alpha<1$ and $\alpha>1$ correspond to the supercritical and subcritical cases respectively. 
For brevity we set $\kappa=1$. 

The SQG equations were introduced in the context of the quasi-geostrophic equations (QG) -- a geophysically relevant 3D model for the displacement of a rotating, stratified fluid from a solid body rotation -- to rectify the fact that the QG description of atmospheric and oceanic processes breaks down in regions near the fluid's boundaries such as the troposphere and ocean surface.  Essentially, there are various dynamical regimes in the fluid media which demand more nuanced descriptions than that given by QG.  For surface proximate flow, this is achieved by assuming the potential vorticity of a QG fluid is uniform.  This assumption allows one to derive the SQG equations from QG.
These models are also mathematically interesting as, due to the fact that strong stratification makes the systems behave in a quasi-lower dimensional fashion, they provide a potentially more friendly setting (compared to the 3D Navier-Stokes equations or 3D Euler equations) to study the problem of global regularity and, additionally, they serve as a testing ground for turbulence theories (cf. \cite{Co_QG_02,CoMaTa_QG_94,Held-et-al}).

A spectral theory of SQG turbulence was initially studied by Blumen in \cite{Blumen1} where he derives Kolmogorov-Kraichnan type scaling laws and associated inertial ranges for the inviscid regime (see also \cite{Held-et-al}).  In SQG there are two ideally conserved quantities, the surface temperature variance and the depth-integrated total energy -- the densities for these quantities are $|\theta|^2$ and $|u|^2$ -- which behave in an analogous fashion to 2D fluid enstrophy and energy in that they exhibit direct and inverse cascades respectively. Recent numerical work in \cite{CoEtAt12_SQG} supports the picture of intermittently distributed, thinning temperature assemblages for $\alpha\in [0,2]$.  Mathematically, the inverse energy cascade in QG turbulence has been studied from a Littlewood-Paley perspective (cf. \cite{Co_QG_02}).
The main goal of this note is to present a rigorous affirmation of the existence of the temperature variance cascade and the scale-locality of the temperature variance flux across an inertial range under conditions which are consistent with qualitative properties of SQG turbulence.

The mathematical methodology used is an adaptation of one developed by Dascaliuc and Gruji\'c to study cascades and flux locality in hydrodynamic turbulence \cite{DaGr1,DaGr5} and extended to study related properties in plasma turbulence \cite{BrGr2}. The adaptation is necessary because fractional dissipation complicates our ability to study dynamical properties of the flow via a localized temperature variance equation (or inequality) as, on $\R^2$, no dominant, \emph{a priori} positive dissipative quantity is immediately apparent.  To overcome this, we incorporate an operator which extends scalar functions of $\R^2$ to $\R^2\times (0,\infty)$.  The fractional Laplacian of a function on $\R^2$ is then recovered via the trace of the normal derivative of the extension.  This approach has been applied to obtain regularity results for critical QG in \cite{CaVa_QG} and certain supercritical versions of QG in \cite{CoWu_QG_09}.

\section{Methodology}
In this section we describe the ensemble averaging process which constitutes the framework on which turbulent cascades will be studied.
Our methodology is an adaptation of the statistical apparatus of Dascaliuc and Gruji\'c originally developed to study hydrodynamic turbulence (cf. \cite{DaGr1}).  We work on a 2D macro-scale domain taken to be $B(x_0,R_0)$, the open ball of radius $R_0$ centered at $x_0$. Multi-scale, statistical comparisons will be achieved using $(K_1,K_2)$-covers which are presently defined.

\begin{definition}Let $K_1,K_2\in \mathbb N$ and $0< R\leq R_0$.  The cover of the macro-scale domain $B(x_0,R_0)$ by the $n$ (open) balls $\{B(x_i,R)\}_{i=1}^n$ is a  {\em $(K_1,K_2)$-cover at scale }$R$ if,
\begin{align*}\bigg( \frac {R_0} R \bigg)^2\leq n \leq K_1 \bigg(\frac {R_0} R\bigg)^2\end{align*}
and, for any $x\in B(x_0,R_0)$, $x$ is contained in at most $K_2$ cover elements.
\end{definition}

In what follows, all covers are understood to be $(K_1,K_2)$-covers.
The positive integers $K_1$ and $K_2$ represent the maximum allowed \emph{global} and \emph{local multiplicities}.

Localization is achieved via refined cut-off functions.  For a cover element centered at $x_i$ of scale $R$, let $\phi_i(x,t)=\eta(t)\psi(x)$ where $\eta\in C_0^\infty(0,\infty)$ and $\psi\in C_0^\infty (\R^2)$ satisfy,
\begin{align}\label{timecutoff}0\leq \eta\leq 1,\qquad \eta=0~\mbox{on }(0,T/3)\cup (5T/3,\infty),\qquad\eta=1~\mbox{ on }[2T/3,4T/3],\qquad\frac {|\partial_t\eta|} {\eta^{2\delta-1} }\leq \frac {C_0} T,
\end{align}
and,
\begin{align}\label{spacecutoff} 0\leq \psi\leq 1,\qquad\psi=1\mbox{ on }B(x_i,R),\qquad\frac {|\partial_i\psi|} {\psi^{\delta}} \leq \frac {C_0} {R},\qquad\frac {|\partial_i\partial_j \psi|} {\psi^{2\delta-1}}\leq \frac {C_0} {R^2},
\end{align}where $1/2< \delta <1$ is fixed.  We require also that $2T\geq R_0^\alpha $ where $\alpha$ reflects the fractional exponent in \eqref{criticalSQG}.  

The macro-scale cut-off function, $\phi_0$, is a fixed, refined cut-off function for the ball $B(x_0,R_0)$ which is supported on $B(x_0,2R_0)$ having spatial and temporal components $\psi_0$ and $\eta_0$.  

Comparisons will be necessary between sub-macro-scale quantities and macro-scale quantities.  To accommodate this we impose several additional conditions on $\phi_i$ when $x_i$ lies near the boundary of $B(x_0,R_0)$.  If $B(x_i,R)\subset B(x_0,R_0)$ we assume $\psi\leq \psi_0$.  Alternatively, when $B(x_i,R)\not\subset B(x_0,R_0)$, let $l(x,y)$ denote the collection of points on the line through $x$ and $y$ and define the sets,  
\begin{align*}
S_0&=B(x_i,R)\cap B(x_0,R_0),
\\S_1&=\big\{x: R_0\leq |x|< 2R_0 ~\mbox{and}~\emptyset\neq \big( l(x,x_0)\cap  \partial B(x_i,R)\cap B(x_0,R_0)^c \big)\big\},
\\S_2&=\bigg( B(x_i,2R)\cup \big\{x: R_0\leq |x|< 2R_0 ~\mbox{and}~\emptyset\neq \big( l(x,x_0)\cap  \partial B(x_i,2R)\cap B(x_0,R_0)^c \big)\big\}\bigg)
\\&\qquad \cap (S_0\cup S_1)^c.
\end{align*}Then, our assumptions are that $\psi$ satisfies (\ref{spacecutoff}), $\psi=1$ on $S_0$, $\psi=\psi_0$ on $S_1$, and $\supp \psi = S_2$.  The above conditions ensure that $\phi\leq \phi_0$ and that $\psi$ can be constructed to have an inwardly oriented gradient field.

Extensions into a third spatial dimension will be necessary when deriving a localized temperature variance equality.  To accommodate this, fix a vertical scale $R^*$ and let $\psi^*\in C^\infty([0,\infty))$ take the value $1$ on $[0,R^*]$ and decrease to $0$ across $[R^*,R^*+R_0)$, in a fashion mirroring the (spatial) gradient restrictions imposed on $\phi_0$.  Then, the vertical extensions of our refined cut-off functions are just $\phi_0^*(x,z,t)=\phi_0(x,t)\psi^*(z)$ and $\phi_i^*(x,z,t)=\phi_i(x,t)\psi^*(z)$.  Note that the property $\phi_i^*\leq \phi_0^*$ is trivially preserved.

The statistical element of the methodology is carefully described in \cite{DaGr1}, although some additional remarks are necessary for our present work.
Let $g$ be a physical density and denote its localized surface-time average over a cover element at scale $R$ around $x_i$ by, \begin{align*} G_i =   \frac 1 T \int_0^{2T} \frac 1 {R^2} \int_{B(x_i,2R)} g(x,t)\phi_{i}(x,t)~dx ,\end{align*}
and let $\langle G\rangle_R$ denote the arithmetic mean -- our version of \emph{ensemble averages} -- over localized averages associated with cover elements,
\[\langle G \rangle_R=\frac 1 n \sum_{i=1}^n   G_i.\]
Assume that $g^*$ is an extension of $g$ in the $z$-direction. Analogous quantities involving the vertical extensions of cover elements are,
\begin{align*} G_i^* =   \frac 1 T \int_0^{2T} \frac 1 {R^2} \int_{B(x_i,2R)} \bigg[\int_0^{R^*+R_0} g^*(x,z,t){\phi^*_i}(x,z,t)~dz\bigg]~dx~dt .\end{align*}
Note that no normalization is carried out in the $z$-direction. This is consistent with the fact that the extended cut-off functions all have the same vertical scale.

We extract statistical information about $g$ at the scale $R$ by studying the collection of all ensemble averages at the given scale.  In particular, for an \emph{a priori} sign varying density, if all covers taken at scale $R$ admit ensemble averages which are mutually comparable and positive, e.g., \[\frac 1 C K \leq \langle G \rangle_R \leq CK,\] for positive, cover independent constants $C$ and $K$, then this density is \emph{essentially positive} (i.e. positive in a statistical sense) at scale $R$.

\section{An equation for the localized temperature variance flux}

Recall that the temperature variance flux through the boundary of a region $B$ is,
\begin{align*}\int_{\partial B} \frac 1 2 \, \theta^2\, u\cdot \hat n \,ds=\int_B (u\cdot \nabla \theta ) \,\theta \,dx,
\end{align*}where $\hat n$ is the outward normal vector.  The refined cut-off functions were constructed so that the gradient field $\nabla \phi_i$ was oriented roughly toward the center of the ball $B(x_i,R)$. This leads to the following realization of the localized temperature variance flux \emph{into} $B(x_i,R)$,
\begin{align*}\int_{\R^2}\frac 1 2 \,\theta^2\, (\nabla \phi_i\cdot u)\,dx=-\int_{\R^2} (u\cdot \nabla \theta)\,\theta \,\phi_i\,dx,
\end{align*}
which is a consequence of the fact that $u$ is divergence free.  An initial formula for the surface-time integrated average flux into $B(x_i,R)$ now follows from \eqref{criticalSQG},
\begin{align*}F_i&:=\int_0^{2T} \int_{\R^2}\frac 1 2 \,\theta^2\, (\nabla \phi_i\cdot u)\,dx\,dt
\\&= \int_0^{2T}\int \phi_i \,\theta\Lambda^\alpha \theta \,dx\,dt - \int_0^{2T}\int  \frac 1 2\, {\theta^2} \, (\partial_t \phi_i)\,dx\,dt.\end{align*}

The methodology of Dascaliuc and Gruji\'c (cf. \cite{DaGr1}) depends crucially on localized energy estimates which are available in the context of suitable weak solutions for 3D NSE (cf. \cite{CKN-82}).  Considering the non-locality introduced by $\Lambda^\alpha$, these local estimates are not directly accessible for solutions to SQG when $\alpha<2$, and, consequently, an indirect approach using an extension operator into a third spatial dimension is necessary. Properties of the extension are detailed in \cite{CaSi}; it has previously been applied to the critical and supercritical formulations of SQG in \cite{CaVa_QG} and \cite{CoWu_QG_09}.  

The desired extension operator, $L$, is defined in \cite{CaSi} via the initial value problem,
\[
\begin{cases}
\nabla\cdot(z^b\nabla L(f))(x,z)=0	&\mbox{~for~} (x,z)\in \R^2\times (0,\infty),
\\L(f)(x,0)=f(x) &\mbox{~for~}x\in \R^2,
\end{cases}
\]where $b=1-\alpha$.
The solution $L(f)$ can be represented with a Poisson-like formula and satisfies, 
\[\Lambda^\alpha f=\lim_{z\to 0}(-z^b \partial_z(Lf)).\] 
In the critical case this reduces to,
\begin{align*}
(\Lambda f )(x)=\partial_\nu (Lf)(x) &\mbox{~for~} x\in \R^2.
\end{align*}

Let $\theta$ and $u$ be solutions to \eqref{criticalSQG} which are smooth on $B(x_0,2R_0)\times (0,2T)$.  The extension of $\theta$ in the positive $z$-direction is denoted, \[\theta^*(x,z,t)=L(\theta(\cdot,t))(x,z).\]

Our formula for the time-integrated localized flux is obtained similarly to the local energy estimates derived in \cite{CaVa_QG} and \cite{CoWu_QG_09}. Multiplying \eqref{criticalSQG} by $\theta\phi_i$ where $\phi_i$ is a refined cut-off function associated with the ball $B(x_i,R)$ the extension of which in the positive $z$-direction is $\phi_i^*$, as well as using the properties of $L$, we infer,
\begin{align*}
F_i&=
\int_0^{2T}\int_{\R^2}\int_0^{\infty} \partial_z\big(\phi_i^* \theta^* z^b\partial_z \theta^*  \big)\,dz\,dx\,dt 
 - \int_0^{2T}\int  \frac 1 2\, {\theta^2} \, (\partial_t \phi_i)\,dx\,dt
\\&= \int_0^{2T}\int_{\R^2}\int_0^{\infty}  \phi_i^* |\nabla \theta^*|^2 z^b \,dz\,dx\,dt + \int_0^{2T}\int_{\R^2}\int_0^{\infty} \nabla \phi_i^* \cdot \nabla \theta^*  z^b \,\theta^* \,dz\,dx\,dt
\\&\quad - \int_0^{2T}\int  \frac 1 2\, {\theta^2} \, (\partial_t \phi_i)\,dx\,dt.
\end{align*}

\section{The temperature variance cascade}

Before stating the main result, we identify certain macro-scale averages taken over the time interval $[0,2T]$ and the ball $B(x_0,2R_0)$.  These are,
\begin{align*}
\vartheta_0&= \frac 1 T\int_0^{2T} \frac 1 {R_0^2} \int \frac 1 2 \theta^2 \phi_0^{2\delta-1}\,dx\,dt,
\\\vartheta_0^*&=\frac 1 T\int_0^{2T} \frac 1 {R_0^2} \int \bigg[ \int_0^{\infty} z^b \frac 1 2{\theta^*}^2 {\phi_0^*}^{2\delta-1}\,dz\bigg]\,dx\,dt,
\\\varepsilon_0^* &=\frac 1 T\int_0^{2T}    \frac 1 {R_0^2} \int \bigg[ \int_0^{\infty}z^b |\nabla\theta^*|^2 {\phi_0^*}\,dz\bigg]\,dx\,dt,
\end{align*}
and have dimensions $l^{2-2\alpha}$, $l^{4-3\alpha}$, and $l^{2-3\alpha}$ respectively.  These dimensions reflect the natural scaling associated with SQG -- namely, $\theta_l = l^{\alpha-1}\theta(lx,l^{\alpha}t)$ -- and are respected by the following length scale which determines the lower limit of physical scales over which the variance cascade will be shown to persist,
\[ \sigma_0=\max \bigg\{\bigg(\frac {\vartheta_0} {\varepsilon_0^*}\bigg)^{1/\alpha}, \bigg(\frac {\vartheta_0^*} {\varepsilon_0^*}\bigg)^{1/2} \bigg\}.
\]
 
The main result concerning the temperature variance cascade follows.

\begin{theorem}Fix $K_1$ and $K_2$ and
let $\beta=\min\{ (8C_0K_1K_2)^{-1/2} ,(8C_0K_1K_2)^{-1/\alpha}\}$ (note $C_0$ is fixed in the definition of our refined cut-off functions). If,
\[\sigma_0<\beta R_0,\]
then, for all $R$ satisfying $\frac 1 {\beta}\sigma_0\leq R\leq R_0$, any $(K_1,K_2)$-cover taken at scale $R$ yields an ensemble average satisfying,
\[\frac 1 {4K_1} \varepsilon_0^*\leq \langle F \rangle_R \leq 4K_2 \varepsilon_0^*. \] 
\end{theorem}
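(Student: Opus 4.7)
The starting point is the localized flux identity derived above, which expresses $F_i$ as the sum of three contributions: a manifestly nonnegative bulk term $I_i = \int_0^{2T}\iint_{\R^2\times(0,\infty)} z^b\phi_i^*|\nabla\theta^*|^2\,dz\,dx\,dt$, a gradient-transfer term $II_i = \int_0^{2T}\iint z^b\,\theta^*\,\nabla\phi_i^*\cdot\nabla\theta^*\,dz\,dx\,dt$, and a temporal-cutoff term $III_i = -\tfrac{1}{2}\int_0^{2T}\int\theta^2\,\partial_t\phi_i\,dx\,dt$. The plan is to show that, in the ensemble average, $\langle I\rangle_R$ is two-sided comparable to $\varepsilon_0^*$, while $|\langle II\rangle_R|$ and $|\langle III\rangle_R|$ become small relative to $\varepsilon_0^*$ precisely when $R\geq\sigma_0/\beta$. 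The theorem then follows from the triangle inequality with the factor-of-four slack serving as room to absorb the error contributions.

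\textbf{Controlling the main term.} The analysis of $\langle I\rangle_R$ depends only on the cover construction and not on the dynamics. The pointwise dominations $\phi_i^*\leq\phi_0^*$ together with the local multiplicity $K_2$ yield $\sum_i I_i\leq K_2 R_0^2 T\varepsilon_0^*$; dividing by $nR^2T$ and using $n\geq(R_0/R)^2$ produces the upper bound $\langle I\rangle_R\leq K_2\varepsilon_0^*$. The lower bound uses the less obvious fact---built into the construction of the refined cut-offs near $\partial B(x_0,R_0)$ via the sets $S_0,S_1,S_2$---that $\sum_i\phi_i^*\geq\phi_0^*$ on the support of $\phi_0^*$, together with $n\leq K_1(R_0/R)^2$, producing $\langle I\rangle_R\geq K_1^{-1}\varepsilon_0^*$.

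\textbf{Controlling the error terms.} For $II_i$, the refined cut-off bounds in \eqref{spacecutoff} give $|\nabla\phi_i^*|^2/\phi_i^*\leq CC_0^2R^{-2}\phi_i^{*,\,2\delta-1}$, and Young's inequality absorbs half of $II_i$ into $I_i$ leaving an error of the form $CR^{-2}\int_0^{2T}\iint z^b\,{\theta^*}^2\,\phi_i^{*,\,2\delta-1}\,dz\,dx\,dt$. Summing, using $\phi_i^{*,\,2\delta-1}\leq\phi_0^{*,\,2\delta-1}$ (valid since $\delta>1/2$) with multiplicity $K_2$, and normalizing produces $|\langle II\rangle_R|\leq \tfrac{1}{2}\langle I\rangle_R + CK_2R^{-2}\vartheta_0^*$; the definition of $\sigma_0$ then bounds the last factor by $CK_2(\sigma_0/R)^2\varepsilon_0^*$. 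For $III_i$, the time-cutoff estimate $|\partial_t\phi_i|\leq C_0T^{-1}\phi_i^{\,2\delta-1}$ in \eqref{timecutoff} together with the same multiplicity argument gives $|\langle III\rangle_R|\leq C_0K_2T^{-1}\vartheta_0\leq 2C_0K_2(\sigma_0/R_0)^\alpha\varepsilon_0^*$, where the last step uses $2T\geq R_0^\alpha$ and $\sigma_0^\alpha\geq\vartheta_0/\varepsilon_0^*$. The choice $\beta=\min\{(8C_0K_1K_2)^{-1/2},(8C_0K_1K_2)^{-1/\alpha}\}$ in the hypothesis is calibrated precisely so that $(\sigma_0/R)^2$ and $(\sigma_0/R_0)^\alpha$ are each small enough in the inertial range to keep the total error below $\tfrac{1}{4K_1}\varepsilon_0^*$, and the upper main-term bound $K_2\varepsilon_0^*$ is inflated only to $4K_2\varepsilon_0^*$ after accounting for the absorbed multiple of $\langle I\rangle_R$.

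\textbf{Main obstacle.} The principal conceptual difficulty is the dimensional mismatch between the two error mechanisms: the temporal error $III$ is naturally measured by $\vartheta_0/T$ with $T\sim R_0^\alpha$, while the extension gradient term $II$ is measured by $\vartheta_0^*/R^2$, and these quantities live at different scaling dimensions unless $\alpha=2$. This forces $\sigma_0$ to be defined as the maximum of two length scales and $\beta$ as the minimum of two powers. The technical delicacy is in arranging the Young splitting so that each error is compared against its own macro-scale dissipation and verifying that the refined cut-off construction produces multiplicity-free dominations in both the $2\delta-1$ and $2\delta$ exponents simultaneously, rather than introducing spurious powers of $K_1$ or $K_2$ that would destroy the calibration of $\beta$.
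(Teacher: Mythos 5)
Your proposal is correct and follows essentially the same route as the paper: the same three-term decomposition of $F_i$, the same Young-inequality absorption of the cross term into the dissipation, the same interpolative bounds $K_1^{-1}\varepsilon_0^*\leq\langle I\rangle_R\leq K_2\varepsilon_0^*$ from the cover multiplicities and cut-off dominations, and the same use of $\sigma_0$ and $\beta$ to reduce both error terms below a fraction of $\varepsilon_0^*$. The only cosmetic difference is that you bound the temporal error via $T^{-1}\lesssim R_0^{-\alpha}$ and the hypothesis $\sigma_0<\beta R_0$, whereas the paper uses $R^{-\alpha}$ and $R\geq\sigma_0/\beta$; both close the argument identically.
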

\begin{proof}
Invoking the definition of $(K_1,K_2)$-covers and the properties of our refined cut-off functions (the fixed constant $C_0$ is, in particular, introduced in that context), several relationships are immediate regarding the terms in the formula for $F_i$ and the associated ensemble averages.  First,
\begin{align*}
A_1:=\bigg|\frac 1 n \sum_{i=1}^n \frac 1 T \int_0^{2T}\frac 1 {R^2} \int_{\R^2} (\partial_t \phi_i)\,\frac {\theta^2} 2\,dx\,dt \bigg|&\leq \frac {C_0K_2} {R^\alpha}\vartheta_0. 
\end{align*}

Second, using the properties of refined cut-off functions and Young's inequality, we see that,
\begin{align*}
\bigg|\int_0^{2T}\int_{\R^2}\int_0^{\infty} \nabla \phi_i^* \cdot \nabla \theta^*  z^b \,\theta^* \,dz\,dx\,dt\bigg|
&\leq \frac 1 2 \int_0^{2T}\int_{\R^2}\int_0^{\infty} \phi_i^*\, |\nabla \theta^*|^2\, z^b\,dz\,dx\,dt
\\&\quad+\frac {C} {R^2} \int_0^{2T}\int_{\R^2}\int_0^{\infty} \frac 1 2 {\theta^*}^2{\phi_i^*}^{2\delta-1}z^b\,dz\,dx\,dt. 
\end{align*} 
 
Ensemble averaging the last term above gives, 
\[A_2:=\frac C {R^2} \frac 1 n \sum_{i=1}^n \int_0^{2T}\int_{\R^2}\int_0^{\infty} \frac 1 2 {\theta^*}^2{\phi_i^*}^{2\delta-1}z^b\,dz\,dx\,dt\leq \frac {C_0K_2} {R^2}\vartheta_0^*.\]
 
As the involved integrands are positive, direct comparison and the definition of $(K_1,K_2)$-covers yields the following interpolative estimate,\begin{align*} \frac 1 {K_1} \varepsilon_0^* \leq  \frac 1 n \sum_{i=1}^n \frac 1 T\int_0^{2T}   \frac 1 {R^2} \int \int_0^{\infty} z^b\,|\nabla\theta^*|^2 \phi_i^*\,dz\,dx\,dt\leq K_2 \varepsilon_0^*.
\end{align*}
Returning to our formula for $F_i$, taking the ensemble average and using the above facts reveals that,
\begin{align*}
\langle F \rangle_R &\leq \frac 3 2  \frac 1 n \sum_{i=1}^n \frac 1 T\int_0^{2T}  \frac 1 {R^2} \int \int_0^{\infty}   z^b\,|\nabla\theta^*|^2 \phi_i^*\,dz\,dx\,dt +A_1+A_2
\\&\leq 2 K_2 \varepsilon_0^* +\frac {C_0K_2} {R^2}\vartheta_0^*+\frac {C_0K_2} {R^\alpha}\vartheta_0,
\end{align*}
and,
\begin{align*}
\langle F \rangle_R &\geq \frac 1 2  \frac 1 n \sum_{i=1}^n \frac 1 T\int_0^{2T}   \frac 1 {R^2} \int  \int_0^{\infty} z^b\,|\nabla\theta^*|^2 \phi_i^*\,dz\,dx\,dt -A_1-A_2
\\&\geq \frac 1 {2K_1}\varepsilon_0^* -\frac {C_0K_2} {R^2}\vartheta_0^*-\frac {C_0K_2} {R^\alpha}\vartheta_0.
\end{align*} Recalling the definition of $\beta$ and the assumption that $\sigma_0/\beta\leq R\leq R_0$, we conclude our proof as we have shown that,
\begin{align*}
\frac 1 {4K_1} \varepsilon_0^* &\leq \langle F\rangle_R \leq 4K_2 \varepsilon_0^*.
\end{align*}
\end{proof}

Scale-locality of the temperature variance flux, i.e. that the average flux at a scale $R$ lying in the inertial range is most strongly correlated with the average fluxes at nearby scales, follows directly from Theorem 2 and, moreover, this correlation propagates exponentially along the dyadic scale (this is expected in SQG turbulence where large scale strain plays less of a role than in hydrodynamic turbulence, cf. \cite{P_etal}). We adopt the notation,
\[G_i=\frac 1 T \int_0^{2T}\int \frac 1 2 \,\theta^2 \, u\cdot\nabla\phi_i\,dx\,dt= R^2F_i,\]
for the time averaged localized temperature variance flux into the ball $B(x_i,R)$ and
$\langle G \rangle_R$ for the associated ensemble average over a $(K_1,K_2)$-cover taken at scale $R$.  The following corollary affirms the scale-locality of the temperature variance flux across the range of scales on which the cascade persists.  The proof is omitted.
\begin{corollary}If the premises of Theorem 2 are satisfied and $\frac 1 \beta \sigma_0 \leq r,R\leq  R_0$, then,
\[\frac {1 } {16K_1K_2} \bigg(\frac {r}{R}\bigg)^2 \leq \frac {\langle G \rangle_r}{\langle G \rangle_R}\leq 16K_1K_2\bigg(\frac {r}{R}\bigg)^2.\]
\end{corollary}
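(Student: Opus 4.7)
The plan is to reduce the locality claim to a direct application of Theorem 2 at the two scales $r$ and $R$, followed by taking a simple ratio. First I would note that the notation just introduced before the corollary makes $G_i = R^2 F_i$: the quantity $G_i$ differs from $F_i$ only by the missing $R^{-2}$ spatial normalization. Since every ball in a cover at a fixed scale carries the same radius, this $R^2$ factor pulls out of the ensemble average to give $\langle G \rangle_R = R^2 \langle F \rangle_R$, and analogously $\langle G \rangle_r = r^2 \langle F \rangle_r$.

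Next, because both $r$ and $R$ lie in the inertial range $[\sigma_0/\beta, R_0]$, Theorem 2 applies verbatim at each scale, yielding
\[\frac 1 {4K_1}\varepsilon_0^* \leq \langle F \rangle_r \leq 4K_2 \varepsilon_0^*, \qquad \frac 1 {4K_1}\varepsilon_0^* \leq \langle F \rangle_R \leq 4K_2 \varepsilon_0^*.\]
Forming the ratio
\[\frac{\langle G \rangle_r}{\langle G \rangle_R} = \bigg(\frac{r}{R}\bigg)^2 \frac{\langle F \rangle_r}{\langle F \rangle_R}\]
and pairing the upper bound in the numerator with the lower bound in the denominator (and conversely), the factors of $\varepsilon_0^*$ cancel and the claimed two-sided inequality drops out with the constants $1/(16K_1K_2)$ and $16K_1K_2$.

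The main (and essentially only) obstacle is cosmetic: keeping the two normalization conventions straight, in particular remembering that $G_i$ is the un-normalized time-averaged flux while $F_i$ carries the $1/R^2$ that makes the bounds in Theorem 2 scale-independent. No new analytical estimate is required; the corollary is a mechanical consequence of the fact that Theorem 2 traps $\langle F \rangle_R$ inside a scale-independent window around $\varepsilon_0^*$, so any ratio of such averages across the inertial range is forced to lie within a constant multiple of the trivial geometric scaling $(r/R)^2$ inherited from the $R^2$ relating $G_i$ to $F_i$. The exponential propagation of the correlation along the dyadic scale mentioned in the paragraph preceding the corollary is then immediate by iterating the bound over consecutive dyadic scales.
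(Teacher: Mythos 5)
Your proposal is correct and is evidently the intended argument (the paper omits the proof precisely because it is this mechanical consequence): writing $\langle G\rangle_\rho=\rho^2\langle F\rangle_\rho$, applying Theorem 2 at both scales $r$ and $R$ (both admissible since they lie in $[\sigma_0/\beta,R_0]$), and cross-multiplying the two-sided bounds yields the stated inequality, with the lower bound also guaranteeing $\langle F\rangle_R>0$ so the ratio is well defined. No gaps.
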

To see the exponential propagation along the dyadic scale, observe that, if $r=2^kR$ for some integer $k$, it follows that,
\[\frac {1 } {16K_1K_2}2^{2k} \leq \frac {\langle G \rangle_r}{\langle G \rangle_R}\leq 16K_1K_2 2^{2k}.\]

\subsection*{Acknowledgments.}

Z.B. acknowledges the support of the \emph{Virginia Space Grant Consortium} via the Graduate Research Fellowship; Z.G. acknowledges the support of the \emph{Research Council of Norway} via the grant 213474/F20, and the \emph{National Science Foundation} via the grant DMS 1212023.

\bibliographystyle{plain}
\bibliography{references}
\end{document}